\definecolor{darkred}{rgb}{0.9,0.1,0.1}
\newtheorem{theorem}{Theorem}[section]
\newtheorem{lemma}{Lemma}[section]
\newtheorem{remark}{Remark}[section]
{\rm}
\definecolor{darkred}{rgb}{0.9,0.1,0.1}
\begin{document}

\title{Resolvent  bounds for  jump generators and ground state asymptotics for nonlocal Schr\"{o}dinger operators
\thanks{The work has been partially supported by SFB 701 (Universitat Bielefeld). The research of A.Piatnitski
and E.Zhizhina has been supported by the Russian Science Foundation (project No. 14-50-00150). The research of
S.Molchanov
has been partially financially supported by the National Science Foundation Grant "Asymptotic and Spectral Analysis of Applied non-self-adjoint problems"
} }

%\footnotemark[3]\footnotetext[3]
%\let\thefootnote\relax\footnote
%\footnotemark[1]\footnotetext[1]
%\renewcommand{\thefootnote}{\arabic{footnote}}
%\let\thefootnote\relax\footnote

\author{
Yuri Kondratiev$^1$
\and 
Stanislav Molchanov$^{2,3}$ 
\and 
Andrey Piatnitski$^{4,5}$ 
 \and Elena
Zhizhina$^6$ }

\date{\begin{minipage}{0,8\textwidth}
{$^1$\small{Fakult\:at fur
 Mathematik, Universitat Bielefeld, 33615 Bielefeld, Germany}} \\[-2mm]
${ }$\ {{\small
(\tt kondrat@math.uni-bielefeld.de)}}\\[2mm]
$^2${\small Department of Mathematics and Statistics, \\[-1.5mm]
${ }$\ University of North Carolina at Charlotte, USA\\[-2mm]
${ }$\ (\tt smolchan@uncc.edu)}\\[2mm]
$^3${\small National Research University "Higher School of Ecinomics", Moscow, Russia}\\[2mm]
{$^{4}${\small The Arctic University of Norway, Tromso, Norway (\tt andrey@sci.lebedev.ru)}}\\[-1mm]
{$^{5}${\small Institute for Information Transmission
Problems, Moscow, Russia }}\\[2mm]
{$^{6}${\small Institute for Information Transmission Problems,
Moscow, Russia\\[-2mm] ${ }$\ (\tt ejj@iitp.ru).}}
\end{minipage}
}

%\date{}

\maketitle

Keywords: nonlocal operator, resolvent kernel, dispersal kernel, principal eigenfunction

AMS classification: 47A10, 60J35, 60J75, 45E10

\begin{abstract}
The paper deals with jump generators with a convolution kernel. Assuming that the kernel
decays either exponentially or polynomially we prove a number of lower and upper bounds
for the resolvent of such operators. We consider two applications of these results.
First we obtain pointwise estimates for principal eigenfunction of jump generators perturbed
by a compactly supported potential (so-called nonlocal Schr\"odinger operators).  Then we consider the Cauchy problem for the corresponding inhomogeneous evolution equations and study the behaviour of its solutions.
\end{abstract}

\section{Introduction}

This paper deals with non-local operators with an integrable  convolution kernel.
%, see (\ref{3}) for their definition.
The properties of such operators depend crucially on the behaviour of the convolution kernel at infinity.  We consider different cases covering both polynomial and exponential rates of decay of its tails.
 Our main aim is to investigate the behavior at infinity of the resolvent kernel of these operators. 
 We prove a number of lower and upper bounds for the said resolvent kernel and,  with the help of these results,  deduce  pointwise bounds for the principal eigenfunction  of the operator obtained from the original convolution type operator by adding a compactly supported potential.
  Another application concerns   the inhomogeneous Cauchy problems for convolution type operators. 
 %Here our goal is to establish sharp lower and upper for  the principal eigenfunction
 %(ground state).

The non-local operators considered here are generators of Markov jump processes with the jump distribution defined by the convolution kernel. The analysis of the behavior of the processes can be performed in terms of the resolvent of its generators.
 On the other hand,  such operators appear in the kinetic description of birth-and-death Markov dynamics of populations in spatial ecology, see  e.g. \cite{FKK} and the literature therein.
In this setting the tail of the kernel can be thought of as the range of dispersion  of newborn individuals. In many applications this kernel might have heavy tails.

The problem of existence of discrete spectrum and the principal eigenfunction for the perturbed operator  has been studied in recent papers \cite{BCV, C10, KMPZh, KMV}.   This operator may be considered as a non-local version of  the Schr\"{o}dinger operator in which the Laplacian is changed to a convolution type operator. It is interesting to observe that the ground state problem for such non-local Schr\"{o}dinger operator
is rather different comparing with that in the classical local case.  We will come to this point later on.

%Notice that in contrast with \cite{CCR} where only the asymptotic estimates for resolvent have been
%obtained, we provide the bounds valid for all $x \in \mathbb R^d$.
To our best knowledge the only work in the existing literature where the behaviour of the resolvent kernel for non-local operators has been studied is \cite{KMV}.
In the mentioned work  the authors consider the case of non-local operators with the convolution kernel decaying super exponentially. The fast decay of the kernel allows one to use the techniques of analytic functions.

In the present paper we consider both polynomially decaying and exponentially decaying convolution kernels. In the former case under some natural conditions we show that the resolvent kernel has the same polynomial decay, see Theorem \ref{t_1}. In the later case the resolvent kernel decays exponentially, see Theorems \ref{t_2}, \ref{l2}. These results imply in particular that the ground state of the perturbed operator shows the same rate of decay at infinity as the dispersal kernel of the non-perturbed operator.

In both cases we also analyze the limit behavior of the mentioned kernels as the spectral parameter tends to zero. This asymptotics play a crucial role when studying the spectral problem for the perturbed operator with "small enough" potential. Then the corresponding principal eigenvalue is small, and the behavior of the principal eigenfunction at infinity can be expressed in terms of the resolvent kernel with a small value of the spectral parameter, see Theorem \ref{T1_1}.

We also consider in this work the behaviour of solutions to the Cauchy problem for a non-local heat equations with a stationary source term. In particular, we provide lower and upper bounds for the solutions and prove the convergence  to the stationary solution. Previously a number of qualitative results  for non-local heat equations have been obtained in a number of works, see \cite{ AMRT, CCR}
and references therein.

The paper is organized as follows. Section 2 deals with the behaviour of the resolvent of integral operators with convolution kernels. In subsection 2.1 we study the case of polynomially decaying kernels
while subsection 2.2 is devoted to the case of exponentially decaying tails.

These results are then used in Section 3 to study the properties of the principal eigenfunction
of the perturbed operator  and the large time behaviour of solutions to the Cauchy problem for the corresponding non-homogeneous evolution equation.

\section{Bounds for resolvent kernel}

We consider the operator
\begin{equation}\label{3}
(L_0 u)(x) \, = \, \int_{\mathbb{R}^d} a(x-y) (u(y) - u(x)) dy.
\end{equation}
Throughout this paper we assume the following properties for the dispersal kernel: $a(\cdot)  \in C_b(\mathbb R^d) \cap L^1({\mathbb R}^d) $ is a nonnegative bounded even continuous function of unit mass, i.e.
\begin{equation}\label{aL1}
\int_{\mathbb R^d} a(x) dx = 1.
\end{equation}
Consequently
\begin{equation}\label{a2}
a(\cdot) \in  L^2 ({\mathbb R}^d) \;\;  \mbox{ and  its Fourier transform } \;  \tilde a(\cdot) = \int_{\mathbb R^d} e^{-i(\cdot, x)} a(x) dx \in L^2 ({\mathbb R}^d) \cap C_b(\mathbb R^d).
\end{equation}

For the further analysis it is convenient to rewrite operator $L_0$ in (\ref{3}) as follows:
\begin{equation}\label{3bisbis}
L_0 u(x) \, = \, \mathcal{S}_a u(x) - u(x), \quad \mathcal{S}_a u(x) = \int_{\mathbb{R}^d} a(x-y) u(y)  dy.
\end{equation}

In this section we study the behavior of the kernel of the resolvent for the operator $L_0$ under the condition that the function $a(x)$ decays either polynomially or exponentially. The resolvent kernel admits the representation
\begin{equation}
\label{GUS}
\mathcal{R}_\lambda(x,y)=(1+\lambda)^{-1}\Big(\delta(x-y)+ G_\lambda(x-y)\Big), \quad \lambda \in (0, \infty),
\end{equation}
where $G_\lambda(x-y)$ is the kernel of the convolution operator
\begin{equation}\label{Gbis}
\sum\limits_{k=1}^\infty\frac{\mathcal{S}_a^k }{(1+\lambda)^k}.
\end{equation}
Denote by $a_k(x-y)$ the kernel of the operator $\mathcal{S}_a^k$, then
\begin{equation}\label{ak}
a_k(x)=a^{\ast k}(x)=\int\limits_{\mathbb R^d}\ldots\int\limits_{\mathbb R^d}a(x-y_1) a(y_1-y_2)\dots a(y_{k-2}-y_{k-1})a(y_{k-1})\,dy_1\ldots dy_{k-1},
\end{equation}
and
\begin{equation}\label{G}
G_\lambda(x) \ = \ \sum_{k=1}^\infty \frac{a_k(x)}{(1+\lambda)^k}
\end{equation}
with $a_k(x)$ defined by (\ref{ak}).

\subsection{Polynomial tail of  dispersal kernel}

In this section we deal with functions $a(x)$ that satisfy the following bounds
\begin{equation}\label{bou_a1}
c_-(1+|x|)^{-(d+\alpha)} \leq a(x) \leq c_+(1+|x|)^{-(d+\alpha)},
\end{equation}
with an arbitrary $\alpha>0$.

\begin{theorem}\label{t_1} There exist constants $0<\tilde c_-(\lambda)\leq \tilde c_+(\lambda)$, such that
\begin{equation}\label{main_ineq1}
\tilde c_-(d+|x|)^{-(d + \alpha)} \leq G_\lambda(x) \leq \tilde c_+(1+|x|)^{-(d + \alpha)},
\end{equation}
where $G_\lambda(x)$ is defined by (\ref{Gbis}).

Furthermore, $ \tilde c_+(\lambda)= O(\lambda^{-(2+d+\alpha)})$ as $\lambda$ goes to 0, and
$$
G_\lambda(x)\geq \frac{C_0}{\lambda}(1+|x|)^{-(d+\alpha)}
$$
for large enough $|x|$ with a constant $C_0$.

\end{theorem}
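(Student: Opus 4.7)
I would split the theorem into three claims and handle them by direct manipulation of the series (\ref{G}) and the recursion $a_k=a\ast a_{k-1}$. The easy lower bound comes from truncating the series at $k=1$: since all terms in (\ref{G}) are nonnegative, $G_\lambda(x)\ge a(x)/(1+\lambda)$, and (\ref{bou_a1}) immediately yields $\tilde c_-(\lambda)\ge c_-/(1+\lambda)$.

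For the upper bound, my plan is to prove by induction on $k$ that $a_k(x)\le A_k(1+|x|)^{-(d+\alpha)}$ with $A_k$ growing at most polynomially in $k$, and then to sum $\sum_k A_k(1+\lambda)^{-k}$. The induction step writes $a_{k+1}(x)=\int a(y)\,a_k(x-y)\,dy$ and splits the domain into $\{|y|\le|x|/2\}$ and its complement. On the first set $|x-y|\ge|x|/2$, so the inductive tail bound on $a_k$ supplies the $(1+|x|)^{-(d+\alpha)}$ decay while $\int a\le 1$ controls $a(y)$; on the second set $a(y)\le c_+2^{d+\alpha}(1+|x|)^{-(d+\alpha)}$ supplies the decay while $\int a_k=1$ controls the other factor. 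The key structural point is to estimate only \emph{one} factor per subregion by its pointwise tail and to use the unit-mass identity for the other; convolving the two power tails directly would produce geometric growth of $A_k$. Combined with the trivial $L^\infty$ estimate $\|a_k\|_\infty\le\|a\|_\infty$ and careful bookkeeping of the constants, summation then gives $\tilde c_+(\lambda)=O(\lambda^{-(2+d+\alpha)})$.

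The refined lower bound at infinity exploits the single-large-jump principle. For each $k$ and $|x|\ge 2R_k$ I would estimate
\begin{equation*}
a_k(x)=\int a(x-y)\,a_{k-1}(y)\,dy \ge \int_{|y|\le R_k} a(x-y)\,a_{k-1}(y)\,dy \ge c'(1+|x|)^{-(d+\alpha)}\,\mathbb{P}(|S_{k-1}|\le R_k),
\end{equation*}
using the lower bound in (\ref{bou_a1}) applied to $a(x-y)$ and $|x-y|\le|x|+R_k$. Here $S_{k-1}$ denotes the random walk with step density $a$; taking $R_k$ of order $k^{1/\min(\alpha,2)}$ and invoking the central or stable limit theorem for $a$ gives $\mathbb{P}(|S_{k-1}|\le R_k)\ge 1/2$ uniformly in $k$, so that $a_k(x)\ge c_0(1+|x|)^{-(d+\alpha)}$ uniformly in $k$ for all $k$ of order at most $|x|^{\min(\alpha,2)}$. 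Summing the geometric tails $\sum_{k=1}^{K}(1+\lambda)^{-k}$ up to $K=\min(c''|x|^{\min(\alpha,2)},C/\lambda)$ then yields $G_\lambda(x)\ge(C_0/\lambda)(1+|x|)^{-(d+\alpha)}$ as soon as $|x|$ is large enough compared to $\lambda^{-1/\min(\alpha,2)}$.

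\emph{Main obstacle.} The delicate part is securing the polynomial (rather than geometric) growth of $A_k$ in the upper-bound induction. A naive execution of the splitting yields $A_{k+1}\le 2^{d+\alpha}(A_k+c_+)$, whose geometric growth makes $\sum A_k(1+\lambda)^{-k}$ diverge for small $\lambda$. Extracting the correct polynomial rate, and hence the exponent $2+d+\alpha$ in the blow-up of $\tilde c_+(\lambda)$ as $\lambda\to 0$, requires using the unit-mass constraint $\int a_k=1$ as a true integral (probabilistic) constraint — not merely as a pointwise bound — in competition with the tail information supplied by the inductive hypothesis. A secondary difficulty in the refined lower bound is verifying the uniform-in-$k$ concentration estimate $\mathbb{P}(|S_{k-1}|\le R_k)\ge 1/2$ in the full range $\alpha>0$, including regimes where $a$ has infinite variance or infinite mean.
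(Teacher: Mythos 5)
Your proposal has a genuine gap in the upper bound, and you have in fact flagged it yourself: the binary induction $a_{k+1}=a\ast a_k$ with the fixed threshold $|x|/2$ yields $A_{k+1}\le 2^{d+\alpha}(A_k+c_+)$, hence $A_k$ grows geometrically and $\sum_k A_k(1+\lambda)^{-k}$ diverges for small $\lambda$. Saying that the fix "requires using the unit-mass constraint as a true integral constraint" is not a proof; no repair of the two-step induction is offered. The missing idea is to abandon the induction and decompose the $k$-fold convolution integral directly: since $x=(x-y_1)+(y_1-y_2)+\cdots+y_{k-1}$ is a sum of $k$ increments, at least one increment has length $\ge |x|/k$, so $\mathbb R^{(k-1)d}$ is covered by the $k$ sets $A_j=\{|y_{j-1}-y_j|\ge |x|/k\}$. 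On $A_j$ you bound the single factor $a(y_{j-1}-y_j)\le c_+(1+|x|/k)^{-(d+\alpha)}$ pointwise and integrate out all the remaining factors using $\int a=1$. This gives
\begin{equation*}
a_k(x)\ \le\ \frac{c_+\,k}{(1+|x|/k)^{d+\alpha}}\ \le\ \frac{c_+\,k^{1+d+\alpha}}{(1+|x|)^{d+\alpha}},
\end{equation*}
i.e.\ polynomial growth in $k$, and then $\tilde c_+(\lambda)=c_+\sum_k k^{1+d+\alpha}(1+\lambda)^{-k}\asymp\lambda^{-(2+d+\alpha)}$, which is exactly the claimed rate. Note that the exponent $2+d+\alpha$ is not extracted from a subtle probabilistic competition; it falls out of this elementary union bound.

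Your lower bounds are closer to the mark. The $k=1$ truncation is exactly what is needed for the first inequality in \eqref{main_ineq1}. For the refined bound, your "one large jump at the end" decomposition $a_k(x)\ge c_-(1+|x|+R_k)^{-(d+\alpha)}\,\mathbb P(|S_{k-1}|\le R_k)$ is morally equivalent to the argument that works, but invoking the central or stable limit theorem is not legitimate here: \eqref{bou_a1} gives only two-sided tail bounds, not regular variation, so for $\alpha\le 2$ the step distribution need not lie in any stable domain of attraction (and for $\alpha\le 1$ it has infinite mean). The concentration estimate you need follows instead from elementary truncation: $\mathbb P(\text{all }|X_j|\le r)\ge(1-Cr^{-\alpha})^{k-1}$ by \eqref{pc}, and on that event $|S_{k-1}|\le kr$; choosing $r=|x|^{1/(\alpha+1)}$ and $k\le |x|^{\alpha/(\alpha+1)}$ keeps both $(1-Cr^{-\alpha})^{k-1}\ge e^{-2C}$ and $kr\le |x|$, giving $a_k(x)\ge C_0(1+|x|)^{-(d+\alpha)}$ uniformly in this range of $k$, and the geometric sum over $k\lesssim 1/\lambda$ produces the factor $C_0/\lambda$. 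With that substitution your lower-bound scheme closes; without it, the claimed uniform bound $\mathbb P(|S_{k-1}|\le R_k)\ge 1/2$ is unsupported in the full range $\alpha>0$.
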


\begin{proof}

I. The upper bound. In $\mathbb R^{(k-1)d}$ introduce the sets
$$
\begin{array}{c}%\displaystyle
A_1=\big\{y\in\mathbb R^{(k-1)d}\,:\,|x-y_1|\geq \frac{|x|}k\big\},
\ A_2=\big\{y\in\mathbb R^{(k-1)d}\,:\,|y_1-y_2|\geq \frac{|x|}k\big\},\ldots,
\\[2mm]
A_k=\big\{y\in\mathbb R^{(k-1)d}\,:\,|y_{k-1}|\geq \frac{|x|}k\big\}.
\end{array}
$$
One can easily check that $\mathbb R^{(k-1)d}\subset\bigcup_{j=1}^k A_j$ and, therefore,
$$
a_k\leq \sum\limits_{j=1}^k\int_{A_j}a(x-y_1) a(y_1-y_2)\dots a(y_{k-2}-y_{k-1})a(y_{k-1})\,dy_1\ldots dy_{k-1}.
$$
Thus considering (\ref{bou_a1}) we obtain
$$
\begin{array}{c}
\displaystyle
\int_{A_1}a(x-y_1) a(y_1-y_2)\dots a(y_{k-2}-y_{k-1})a(y_{k-1})\,dy_1\ldots dy_{k-1}\\[5mm]
\displaystyle \leq
\frac{c_+}{(1+|k^{-1}x|)^{d+\alpha}}
\int_{A_1} a(y_1-y_2)\dots a(y_{k-2}-y_{k-1})a(y_{k-1})\,dy_1\ldots dy_{k-1}\\[5mm]
\displaystyle \leq
\frac{c_+}{(1+|k^{-1}x|)^{d+\alpha}}
\int_{\mathbb R^{k-1}} a(y_1-y_2)\dots a(y_{k-2}-y_{k-1})a(y_{k-1})\,dy_1\ldots dy_{k-1}=
\frac{c_+}{(1+|k^{-1}x|)^{d+\alpha}}.
\end{array}
$$
Similarly,
$$
\int_{A_j}a(x-y_1) a(y_1-y_2)\dots a(y_{k-2}-y_{k-1})a(y_{k-1})\,dy_1\ldots dy_{k-1}\leq
\frac{c_+}{(1+|k^{-1}x|)^{d+\alpha}}
$$
for $j=1,2,\ldots, k-1$. Finally,
\begin{equation}\label{akbis}
a_k(x)\leq\frac{c_+ k}{(1+|k^{-1}x|)^{d+\alpha}}
\end{equation}
Denote $\Lambda^2=(1+\lambda)$, then
$$
\Lambda^{-2k} a_k(x)\leq \frac{c_+ k\Lambda^{-k}}{(\Lambda^{k/(d+\alpha)}+
|\Lambda^{k/(d+\alpha)}k^{-1}x|)^{d+\alpha}}\leq
\frac{c_+ k\Lambda^{-k}}{(1+
|\gamma x|)^{d+\alpha}}
$$
with $\gamma=\min\limits_k \Lambda^{k/(d+\alpha)}k^{-1}$. Notice that $\gamma=\gamma(\lambda)>0$, and $\gamma = O(\lambda)$ for small $\lambda>0$. Summing up these inequalities in $k$ we conclude that the kernel $G_\lambda (x)$ is bounded from above by
$$
\frac{\tilde c_+ }{(1 + | x|)^{d+\alpha}}.
$$
This completes the proof of the upper bound in \eqref{main_ineq1}.

We proceed with the asymptotics of  $\tilde c_+ (\lambda)$ as $\lambda \to 0$. Using (\ref{akbis}) we get the following upper bound for the sum (\ref{G}):
$$
\sum_{k=1}^{\infty} \frac{c_+ k^{1+d+\alpha}(1+\lambda)^{-k}}{(k+|x|)^{d+\alpha}} \leq  \frac{c_+}{(1+|x|)^{d+\alpha}} \sum_{k=1}^{\infty} k^{1+d+\alpha}(1+\lambda)^{-k}.
$$
Consequently,
$$
\tilde c_+ (\lambda) = c_+ \sum_{k=1}^{\infty} \frac{k^{1+d+\alpha}}{(1+\lambda)^{k}} = O(\lambda^{-(2+d+\alpha)}) \quad \mbox{ as } \; \lambda \to 0.
$$
In order to justify the last relation we first restrict the summation to a finite range of $k$.
Namely,
$$
\sum_{k=1}^{\infty} \frac{k^{1+d+\alpha}}{(1+\lambda)^{k}}\geq \sum_{k=1}^{[\log 2/\lambda]} \frac{k^{1+d+\alpha}}{(1+\lambda)^{k}}\geq \frac12\sum_{k=1}^{[\log 2/\lambda]}k^{1+d+\alpha}
\geq C\lambda^{-(2+d+\alpha)}
$$
To obtain an upper bound we estimate the contribution of the terms related to $k\in J_m=\{k\in\mathbb Z^+\,:\, m\lambda^{-1}\log2\leq k\leq (m+1)\lambda^{-1}\log2\}$, $m\in\mathbb Z^+$:
$$
\sum_{k\in J_m} \frac{k^{1+d+\alpha}}{(1+\lambda)^{k}}\leq \sum_{k=J_m} \frac{k^{1+d+\alpha}}{2^{m}}\leq \frac{(m+1)^{2+d+\alpha}}{2^m}\lambda^{-(2+d+\alpha)}.
$$
Summing up in $m$ we arrive at the desired bound
$$
\sum_{k=0}^\infty \frac{k^{1+d+\alpha}}{(1+\lambda)^{k}}\leq C\lambda^{-(2+d+\alpha)}.
$$

\vskip5mm
II. The lower bound. The lower bound is quite straightforward, if we take in (\ref{GUS}) the first term of the sum.
To trace the dependence of the lower bound on $\lambda$ for small values of $\lambda>0$ we should take into account the higher order terms in (\ref{G}).

From \eqref{bou_a1} it follows that
$$
\int\limits_{|y|\geq r}a(y)\,dy\leq C r^{-\alpha}
$$
with a constant $C=C(c_+,\alpha,d)$ that only depends on $c_+, \ \alpha$ and $d$. Then
\begin{equation}\label{pc}
\int\limits_{|y|\leq r}a(y)\,dy\geq 1-C r^{-\alpha}.
\end{equation}

\begin{lemma}\label{l_1_bis}
For all $\lambda \in (0,1)$ and for all
\begin{equation}\label{L1}
|x|\ge \Big(\frac{1}{\lambda}\Big)^{\frac{\alpha+1}{\alpha}} m \quad \mbox{ with } \; m =
(2C)^{\frac{\alpha+1}\alpha},
\end{equation}
where $C$ is the same constant as in \eqref{pc}, we have
$$
G_\lambda(x)\geq \frac{C_0}{\lambda}(1+|x|)^{-(d+\alpha)},
$$
where $C_0$ is a positive constant that does not depend on $\lambda$.
\end{lemma}

\begin{proof}
After change of variables we get
$$
a_k(x)=\int\limits_{\mathbb R^d}\ldots\int\limits_{\mathbb R^d}a(z_1) a(z_2)\dots a(z_{k-1})
a(x-z_1-z_2-\ldots-z_{k-1})\,dz_1\ldots dz_{k-1}
$$
$$
\geq\int\limits_{B_r}a(z_1) a(z_2)\dots a(z_{k-1})
a(x-z_1-z_2-\ldots-z_{k-1})\,dz_1\ldots dz_{k-1},
$$
where $B_r=\{z\in \mathbb R^{(k-1)d}\,:\,|z_j|\leq r\}$. For $z\in B_r$ the following inequalitty is fulfilled:
$$
a(x-z_1-z_2-\ldots-z_{k-1})\geq c_-(1+|x|+kr)^{-(d+\alpha)}.
$$
Moreover,
$$
\int\limits_{B_r}a(z_1) a(z_2)\dots a(z_{k-1})\,dz\geq(1-C r^{-\alpha})^{k-1}
$$
with a constant $C$. This yields
$$
a_k(x)\geq c_-(1-C r^{-\alpha})^{k-1}(1+|x|+kr)^{-(d+\alpha)}.
$$
Notice that under our choice of the constant $m$ in (\ref{L1}) for all $\xi \geq m$ the inequality holds
\begin{equation}\label{m}
1- C \xi^{-\frac{\alpha}{\alpha+1}} \ge e^{-2C \xi^{-\frac{\alpha}{\alpha+1}}}.
\end{equation}
Then we consider all positive integer $k$ that satisfy the estimate
$$
k \leq k_0(|x|) = \left[ |x|^{\frac\alpha{\alpha+1}} \right]+1,
$$
and set
$$
r= r(|x|)=|x|^{\frac1{\alpha+1}}.
$$
Then using (\ref{m}) we have that
$$
1-C r^{-\alpha} \ge e^{-2 C r^{-\alpha}} \quad \mbox{ for all } \quad |x| \ge m,
$$
and for all $k \le k_0$ we have the uniform lower bound (for all $x$, meeting (\ref{L1})):
$$
(1-C r^{-\alpha})^{k-1} \ge e^{-2C k_0 r^{-\alpha}} \ge e^{-2 C} = \tilde C_0.
$$
In addition,
$$
(1+|x|+kr)^{-(d+\alpha)}\ge (1+2|x|)^{-(d+\alpha)}.
$$
Finally, we have
\begin{equation}\label{ee_bis}
a_k(x)\geq C_0(1+|x|)^{-(d+\alpha)}, \quad C_0 = 2^{-(d+\alpha)} \tilde C_0 c_-, \quad \alpha>0.
\end{equation}
Summing up in $k$ and recalling inequality (\ref{L1}), we obtain
$$
G_\lambda(x)\geq C_0(1+|x|)^{-(d+\alpha)}\sum\limits_{k=1}^{k_0(|x|)}
(1+\lambda)^{-k} \ge
$$
$$
C_0(1+|x|)^{-(d+\alpha)}\frac{1 - (1+\lambda)^{-|x|^{\frac\alpha{\alpha+1}}}}{\lambda}
\geq\frac{C_0}{2 \lambda}(1+|x|)^{-(d+\alpha)}.
$$
We used in the last inequality that the function $g(u)=(1+u)^{-\frac{1}{u}}$ is increasing as $u \in (0,1)$, and $g(0)=e^{-1}, \, g(1)=\frac12$.
\end{proof}
Theorem \ref{t_1} is completely proved.

\end{proof}

\vskip8mm

In the next section we study the behavior of the function $G_\lambda(x)$ in the case of exponentially decaying convolution kernel $a(x)$.

\subsection{Exponential tail of  dispersal kernel}

In this section we consider the exponentially decaying dispersal kernels. Namely, we assume that in addition to (\ref{aL1}) - (\ref{a2}) the function $a(x)$ satisfies the following upper bound:
\begin{equation}\label{expbound}
a(x) \ \le \ c \ \exp\{- \delta |x|\}
\end{equation}
with a positive constant $\delta$. We prove here that the function $G_\lambda(x)$ defined by (\ref{Gbis}) decays exponentially.
The proof relies essentially on a probabilistic approach.

\begin{theorem}\label{t_2} There exist positive constants $k = k(\lambda), \ m = m(\lambda)$, such that the function $G_\lambda(x)$ satisfies the following upper bound:
\begin{equation}\label{exp_ineq}
G_\lambda(x) \leq \  k(\lambda) e^{- m (\lambda) |x|}.
\end{equation}
Moreover, $k(\lambda) \to \infty$ and $m(\lambda)= O(\lambda)$, as $\lambda \to 0$.

\end{theorem}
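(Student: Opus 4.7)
The plan is to bound $a_k(x)$ pointwise by an exponential tilt (Cram\'er-type estimate) and then sum the geometric series in \eqref{G}. For $\beta \in (0,\delta)$ and a unit vector $e$, introduce the Laplace moment
$$\psi(\beta e) := \int_{\mathbb{R}^d} e^{\beta e\cdot z}\, a(z)\,dz,$$
which is finite by \eqref{expbound}. Since $a$ is even with unit mass, a Taylor expansion around $\beta=0$ yields $\psi(\beta e) = 1 + \tfrac12 \langle \Sigma e,e\rangle \beta^2 + O(\beta^4)$, where $\Sigma_{ij} = \int z_i z_j\, a(z)\,dz$ is finite thanks to the exponential decay. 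Hence $M(\beta) := \sup_{|e|=1}\psi(\beta e)$ satisfies $M(0)=1$ and $M(\beta) - 1 \leq C_1\beta^2$ for small $\beta$, uniformly in the direction $e$.

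The key pointwise bound on $a_k$ comes from the recursion $a_k = a * a_{k-1}$. Fix $x\neq 0$, set $e = x/|x|$, and distribute the factor $e^{\beta|x|} = e^{\beta e\cdot x}$ across the convolution:
$$a_k(x)\, e^{\beta|x|} \;=\; \int \bigl[a(x-y)\, e^{\beta e\cdot(x-y)}\bigr]\, \bigl[a_{k-1}(y)\, e^{\beta e\cdot y}\bigr]\,dy.$$
By \eqref{expbound}, the first bracketed factor is dominated by $c\, e^{-(\delta-\beta)|x-y|} \leq c$ for $\beta<\delta$, uniformly in $e$. The remaining integral equals $\psi(\beta e)^{k-1}$ by the Laplace-convolution identity. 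This produces, for every $k\geq 1$, the pointwise bound
$$a_k(x) \;\leq\; c\, M(\beta)^{k-1}\, e^{-\beta|x|}.$$

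Substituting into \eqref{G} and summing, provided $M(\beta)<1+\lambda$,
$$G_\lambda(x) \;\leq\; c\, e^{-\beta|x|}\sum_{k=1}^\infty \frac{M(\beta)^{k-1}}{(1+\lambda)^k} \;=\; \frac{c\, e^{-\beta|x|}}{1+\lambda - M(\beta)}.$$
The final step is to choose $\beta=\beta(\lambda)$. Take $\beta = \gamma\lambda$ for a small fixed $\gamma>0$: then $M(\beta)-1 \leq C_1\gamma^2\lambda^2$, so $1+\lambda - M(\beta) \geq \lambda/2$ for $\lambda$ sufficiently small, yielding $G_\lambda(x) \leq (2c/\lambda)\, e^{-\gamma\lambda|x|}$, i.e.\ \eqref{exp_ineq} with $k(\lambda) = 2c/\lambda \to \infty$ and $m(\lambda) = \gamma\lambda = O(\lambda)$.

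The main obstacle is the trade-off between $\beta$ (the exponential decay rate in $|x|$) and the denominator $1+\lambda - M(\beta)$: since $M(\beta)-1 \sim \beta^2$, one cannot push $\beta$ past $\sqrt\lambda$ without killing geometric convergence of the series. Fortunately the claim $m = O(\lambda)$ lies well below this threshold, so $\beta = \gamma\lambda$ with small $\gamma$ works comfortably. The ``probabilistic approach'' mentioned by the authors fits this construction naturally, since the tilt $a(z)\mapsto \psi(\beta e)^{-1} e^{\beta e\cdot z} a(z)$ is precisely the Cram\'er change of measure for the compound Poisson random walk with jump density $a$ underlying the generator $L_0$.
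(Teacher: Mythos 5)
Your proof is correct, and while it rests on the same underlying idea as the paper's (exponential moments of the jump density, i.e.\ a Chernoff/Cram\'er bound, which is what the authors mean by their ``probabilistic approach''), the route is genuinely different. The paper proves a two--regime tail estimate for the random walk, $P\{|\theta\cdot S_n|>r\}$ in \eqref{theta}, covers the sphere by finitely many directions to pass to $P\{|S_n|>r\}$ in \eqref{P_r}, converts this tail bound back into a pointwise bound on the density $a_n$ via the splitting $a_{n+1}(x)=\int a(x-y)\,dF_n(y)$ over $|y-x|\lessgtr \tfrac12|x|$ (yielding \eqref{Pn_r}), and finally splits the series \eqref{G} at $n\sim |x|/B$, treating the Gaussian and exponential regimes separately. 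You instead tilt the convolution identity $a_k=a*a_{k-1}$ directly, obtaining the one--parameter family of pointwise bounds $a_k(x)\le c\,M(\beta)^{k-1}e^{-\beta|x|}$ and summing a single geometric series; the direction is handled by choosing $e=x/|x|$ and noting that $M(\beta)-1\le C_1\beta^2$ uniformly in $e$ (evenness of $a$ kills the linear term, and the quadratic remainder is controlled by $\int|z|^2e^{\beta_0|z|}a(z)\,dz<\infty$ for $\beta_0<\delta$). Your version is shorter, avoids both the covering argument and the probability--to--density conversion, and makes the trade--off $\beta^2\lesssim\lambda$ versus $\beta=O(\lambda)$ transparent; the paper's version produces the sharper intermediate estimate \eqref{Pn_r} on $a_n$ itself (Gaussian decay in the bulk regime $|x|\le Bn$), which is of independent interest but is not needed for \eqref{exp_ineq}. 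Both arguments give $k(\lambda)=O(\lambda^{-1})$ and $m(\lambda)=O(\lambda)$. One small point to tighten: for a fixed $\lambda$ that is not small you should simply note that $M(\beta)\to1$ as $\beta\to0^+$, so some admissible $\beta>0$ with $M(\beta)<1+\lambda$ always exists; your choice $\beta=\gamma\lambda$ is tailored to the asymptotic regime $\lambda\to0$.
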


\begin{proof}

We start the proof with the following Lemma.

\begin{lemma}\label{lexp}
Let $X_i$ be i.i.d. random variables taking values in $\mathbb R^d$ with the distribution density $a(x)$ satisfying the upper bound (\ref{expbound}). Consider a unit vector $\theta \in \mathbb R^d$, and denote $S_n = X_1+ \ldots +X_n$. Then there exists a constant $c_1 = c_1(c, \delta)$ such that for all $n \ge 1$ the following estimates hold:
\begin{equation}\label{theta}
P \{ |\theta \cdot S_n | > r \} \le \left\{
\begin{array}{ll}
\displaystyle
e^{- \frac{r^2}{4 c_1 n}},&  0<r \le    \delta  c_1 n \\ [2mm]
e^{- \frac{\delta r}{4}},&  r >    \delta c_1 n
\end{array} \right.
\end{equation}
where $\delta$ is the same as in (\ref{expbound}).
\end{lemma}

\begin{proof}

Estimate (\ref{expbound}) on the density $a(x)$ is isotropic, hence we can take $\theta = e_1$ and consider an 1-d random variable $\xi = \theta \cdot X_1 = X_1^{(1)}$. The distribution of $\xi$ satisfies the estimate similar to (\ref{expbound}) with the same $\delta$ and some constant $\tilde c =  \tilde c(c, \delta)$:
$$
a_\xi(x)  \ \le \ \tilde c e^{- \delta |x|}, \quad x \in \mathbb R^d.
$$
Let  $k \in (0, \frac\delta2]$ be a constant, then  using the Taylor decomposition for the exponent $e^{k|\xi|}$ and estimates on the moments of $\xi$ we get
\begin{equation}\label{xiexp}
{\mathbb E} e^{k|\xi|} \ \le \ e^{c_1 k^2} \quad \mbox{ for } \; 0<k\le \frac\delta2
\end{equation}
with a constant $c_1 = c_1(\tilde c, \delta)$.

Let us take a unit vector $\theta \in \mathbb R^d$ and fix $n \in \mathbb N$. Then
using the Chebyshev inequality, the independence of random variables $X_j$ and (\ref{xiexp}) we get for the 1-d random variable $\theta \cdot S_n$:
$$
P \{ |\theta \cdot S_n | > r \} = P \{ k |\theta \cdot S_n | > k  r \}  \le \min\limits_{0<k \le \frac{\delta}{2}} \frac{(\mathbb{ E} e^{k |\theta \cdot X_1|})^n}{e^{kr}} \le
\min\limits_{0<k \le \frac{\delta}{2}} e^{ c_1 k^2 n - kr} = \exp \big\{  \min\limits_{0<k \le \frac{\delta}{2}} f(k)   \big\}
$$
with $f(k) =  c_1 k^2 n - kr$. We consider two cases. If
$$
k_0 = \mathrm{argmin} f(k) = \frac{r}{2 c_1 n} \le \frac\delta2,
$$
then $\min\limits_{0<k \le \frac{\delta}{2}} f(k) = f(k_0) =  -\frac{r^2}{4 c_1 n}$, and
$$
P \{ |\theta \cdot S_n | > r \} \le e^{- \frac{r^2}{4 c_1 n}}.
$$
If $r>\delta c_1 n$, then $k_0 > \frac\delta2$. Consequently, $\min\limits_{0<k \le \frac{\delta}{2}} f(k) = f(\frac\delta2)$ and
$$
P \{ |\theta \cdot S_n | > r \} \le e^{ \frac{c_1 \delta^2 n}{4} - \frac{\delta r}{2}} = e^{(\frac{c_1 \delta^2 n}{4} - \frac{\delta r}{4}) - \frac{\delta r}{4}} \le  e^{- \frac{\delta r}{4}}.
$$
\end{proof}

For a $d$-dimensional random vector $\eta$ and arbitrary $\varepsilon>0$ one can find a finite collection of unit vectors $\theta_1, \ldots, \theta_N, \; N=N(\varepsilon, d)$ such that
$$
\{ |\eta | > r \}  \subset \bigcup_{i=1}^N \{ |\theta_i \cdot \eta| >(1-\varepsilon) r \}
$$
Then
% and approximate event $\{|\eta|>r \}$ for $d$-dimensional random variable $\eta$ from above by the union $\cup_{i=1}^N \{ |\theta_i \cdot \eta| >(1-\varepsilon) r \}$. For any $\varepsilon>0$ one can find corresponding $N=N(\varepsilon, d)$ with
\begin{equation}\label{N}
P \{ |\eta | > r \} \le N(\varepsilon, d) \ P \{ |\theta_1 \cdot \eta | > (1-\varepsilon )r \}.
\end{equation}
Together with the result of Lemma \ref{lexp} it gives
\begin{equation}\label{P_r}
P \{ | S_n | > r \} \le \left\{
\begin{array}{ll}
\displaystyle
c_2 \ e^{- \frac{r^2}{4 c_1 n}},&  0<r \le    d_1 n \\ [2mm]
c_3 \ e^{- \frac{\delta r}{4}},&  r >    d_1 n
\end{array} \right.
\end{equation}
with constants $c_2, \ c_3, d_1$ which do not depend on $r$ and $n$.

 Next we show that the density $a_n (x)$ satisfies the estimate similar to (\ref{P_r}). Indeed, denote  by $F_n(y)$ the distribution function of $S_n$, then from (\ref{expbound}) it follows
$$
a_{n+1}(x) = \int\limits_{\mathbb R^d} a(x-y) \ dF_n (y) \le c \int\limits_{\mathbb R^d} e^{-\delta |x-y|} \ dF_n (y) =
$$
$$
c \int\limits_{|y-x| \le \frac12 |x| } e^{-\delta |x-y|} \ dF_n (y) + c \int\limits_{|y-x|>\frac12 |x|} e^{-\delta |x-y|} \ dF_n (y) \le
c P \big\{ |S_n|\ge \frac12 |x| \big\} + c e^{-\frac12 \delta |x|}.
$$
Together with (\ref{P_r}) this yields for all $n \ge 1$:
\begin{equation}\label{Pn_r}
a_n(x) \le \left\{
\begin{array}{ll}
\displaystyle
\tilde c_1 \ e^{- l_1 \frac{|x|^2}{n}},&  |x| \le B n \\ [2mm]
\tilde c_2 \ e^{- l_2 |x|},&  |x| >  B n
\end{array} \right.
\end{equation}
with constants $\tilde c_1, \tilde c_2, l_1, l_2, B$ which do not depend on $|x|$ and $n$.
Let us estimate now $G_\lambda(x)$ from above using (\ref{G}) and (\ref{Pn_r}):
$$
G_\lambda (x) \ = \ \sum_{n=1}^{\left[ \frac{|x|}{B} \right]} \frac{a_n(x)}{(1+\lambda)^n} +  \sum_{n > \left[ \frac{|x|}{B} \right]} \frac{a_n(x)}{(1+\lambda)^n} \  \le
$$
$$
\frac{\tilde c_2}{\lambda} e^{-l_2|x|} + \tilde c_1  \sum_{n \ge \left[ \frac{|x|}{B} \right]} \frac{1}{(1+\lambda)^n} \ \le \ \frac{\tilde c_3}{\lambda} \left( e^{-l_2|x|} + e^{-l_3(\lambda)|x|} \right) \ \le \ k(\lambda) e^{-m(\lambda)|x|}
$$
with $k(\lambda) = \frac{2 \tilde c_3}{\lambda}$, and $l_3(\lambda) = \frac1B \ln(1+\lambda)$, $m(\lambda) = \min \{ l_2,\ l_3(\lambda)\}$. Consequently, if $\lambda$ is small enough, then $m(\lambda) = l_3(\lambda) = \frac{\lambda}{B}(1+o(1))$.

\end{proof}

\begin{remark}
If we assume that there exist two constants $c_1, c_2$, such that
$$
c_1 e^{-\delta |x|} \le a(x) \le c_2 e^{-\delta |x|} ,\quad x  \in \mathbb R^d,
$$
then
$$
k_1 (\lambda) e^{- \delta |x|} \leq G_\lambda(x) \leq \  k_2(\lambda) e^{- m (\lambda) |x|}
$$
with positive constants $k_1(\lambda), k_2(\lambda), m(\lambda)$.
\end{remark}

In the one-dimensional case, $d=1$, the constants $m(\lambda)$ in Theorem \ref{t_2} can be found more precisely.
We assume that the function $a(x)$ meets the following asymptotics at infinity:
\begin{equation}\label{aexp}
\lim_{|x| \to \infty} \frac{\ln a(x)}{|x|} = -c
\end{equation}
with a constant $c>0$. Then the Fourier transform $\tilde a(p)$ is an analytic function in the strip $|Im \ p|<c$.
Assume additionally some smoothness of the function $a(x)$ guaranteeing that
\begin{equation}\label{a''}
\tilde a_\kappa (v) \equiv \tilde a(i\kappa+v) = \int a(x) e^{\kappa x} e^{-ivx} dx \in L^1(\mathbb R) \quad \mbox{ for any } \; 0 \leq \kappa<c.
\end{equation}
In particular, (\ref{a''}) is valid if  $a(x) \in C^2(\mathbb R)$ and $\left(a(x)e^{\kappa x}\right)', \left(a(x)e^{\kappa x}\right)'' \in L^1(\mathbb R)$ for any $0 \leq \kappa<c$.

We are interested in the asymptotcis at infinity of $G_\lambda(x)$ introduced in (\ref{G}).
Let us consider solutions  $\hat p = \hat p(\lambda)$  of the equation
\begin{equation}\label{akappa}
\tilde a(\hat p) = 1+\lambda.
\end{equation}
It is easy to see, that if there exist solutions of (\ref{akappa}), then there exists at least a pure imaginary solution $\hat p = iq$, and  $\beta>q$ for any other solutions $\hat p = \alpha +i\beta$ of (\ref{akappa}). Indeed, considering the properties of $a(x)$ we conclude that $\tilde a(iq)$ is a continuous positive increasing function of $q$, and $\tilde a(iq) > |\tilde a(iq+\alpha)|$.

\begin{theorem}\label{l2}
Assume that $a(x)$ satisfies conditions (\ref{aL1}) - (\ref{a2}) and (\ref{aexp}), and let $\lambda>0$ be a positive constant. There exist constants $C(\lambda), C_{-}(\lambda), C_{+}(\lambda)$ such that \\
1) if there exists a pure imaginary solution $\hat p(\lambda) = iq(\lambda) $ of equation (\ref{akappa})
with $q (\lambda) <c$, then $G_\lambda(x)$ has the asymptotics
\begin{equation}\label{Exp1}
G_\lambda(x) = e^{-q(\lambda) |x|}(C(\lambda) + o(1)), \quad |x| \to \infty;
\end{equation}
2) otherwise we have the following two-sided bound:
\begin{equation}\label{Exp2}
C_{-}(\lambda) e^{-(c+ \varepsilon) |x|} \le G_\lambda(x) \le C_+(\lambda) e^{-(c-\varepsilon)|x|}
\end{equation}
with any $\varepsilon>0$.
\end{theorem}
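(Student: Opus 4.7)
The plan is to use the Fourier representation of $G_\lambda$. Summing the geometric series in (\ref{G}) yields
\[
\tilde G_\lambda(p) \;=\; \frac{\tilde a(p)}{1+\lambda - \tilde a(p)},
\]
so by Fourier inversion
\[
G_\lambda(x) \;=\; \frac{1}{2\pi}\int_{\mathbb R} \frac{\tilde a(p)\,e^{ipx}}{1+\lambda - \tilde a(p)}\, dp.
\]
Since $a$ is even, $G_\lambda(-x) = G_\lambda(x)$, so it is enough to treat $x>0$. By (\ref{aexp}) and (\ref{a''}), $\tilde a$ is analytic in the strip $|\mathrm{Im}\,p|<c$, its trace $\tilde a_\kappa(\cdot)$ on any horizontal line $\mathbb R + i\kappa$ with $0\le\kappa<c$ lies in $L^1(\mathbb R)$ and tends to $0$ at infinity by Riemann--Lebesgue, and the singularities of the integrand in $0<\mathrm{Im}\,p<c$ are exactly the roots of $\tilde a(p)=1+\lambda$, which by the remarks preceding the theorem all have imaginary part $\ge q(\lambda)$.

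In case 1, fix $\varepsilon>0$ so small that the strip $q<\mathrm{Im}\,p\le q+\varepsilon$ is free of roots; this is possible because the zero set of the analytic function $\tilde a(\cdot)-(1+\lambda)$ is discrete. Shift the integration contour from $\mathbb R$ up to $\mathbb R+i(q+\varepsilon)$, the vertical sides at $\pm R$ vanishing as $R\to\infty$ by the decay of $\tilde a_\kappa$. The only pole crossed is the simple pole at $p=iq$, with residue $-(1+\lambda)e^{-qx}/\tilde a'(iq)$. Using the evenness of $a$ one obtains $\tilde a'(iq) = -iM$ with $M = \int x\, a(x)\sinh(qx)\,dx > 0$, so the residue contribution equals $C(\lambda)\,e^{-qx}$ with $C(\lambda) = (1+\lambda)/M > 0$, while the remaining integral on $\mathbb R + i(q+\varepsilon)$ is $O(e^{-(q+\varepsilon)x}) = o(e^{-qx})$. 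This gives (\ref{Exp1}).

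In case 2 the ordering of roots shows that the integrand is pole-free in the full strip $0<\mathrm{Im}\,p<c$: shifting the contour to $\mathbb R + i(c-\varepsilon)$ for arbitrary $\varepsilon>0$ produces no residue and an integral bounded by $C_+(\lambda)\,e^{-(c-\varepsilon)x}$, which is the upper estimate in (\ref{Exp2}). The matching lower bound requires no contour argument: keeping only the first term in (\ref{G}) gives $G_\lambda(x)\ge a(x)/(1+\lambda)$, and by (\ref{aexp}) one has $a(x)\ge e^{-(c+\varepsilon)|x|}$ for $|x|$ large enough, yielding $G_\lambda(x)\ge C_-(\lambda)\,e^{-(c+\varepsilon)|x|}$ after adjusting $C_-(\lambda)$ so that the estimate holds for all $x$.

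The main technical obstacle is the rigorous justification of the contour deformation. Absolute integrability of the shifted integrand follows from (\ref{a''}) together with the fact that $1+\lambda-\tilde a_\kappa(v)$ stays bounded away from zero for $|v|$ large; the vanishing of the vertical sides of the approximating rectangles at $\mathrm{Re}\,p = \pm R_n$ is obtained along a suitable exhausting sequence $R_n\to\infty$ via Riemann--Lebesgue in the shifted variable. A minor additional point is verifying $\tilde a'(iq)\ne 0$, which follows from the strict monotonicity of the real positive function $q\mapsto\tilde a(iq)$ guaranteed by the evenness of $a$.
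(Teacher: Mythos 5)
Your proposal is correct and follows essentially the same route as the paper: the Fourier representation of $G_\lambda$, analyticity of $\tilde a$ in the strip $|\mathrm{Im}\,p|<c$ (using the standing assumption (\ref{a''})), contour deformation picking up the residue at the lowest root $p=iq$ in case 1 and no residue in case 2, with the lower bound in (\ref{Exp2}) obtained from the first term of the Neumann series. The only differences are cosmetic: you reduce to $x>0$ by evenness instead of using a lower-half-plane contour, and you compute the residue explicitly (verifying $\tilde a'(iq)\neq 0$ and $C(\lambda)>0$), which the paper leaves implicit.
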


\begin{remark}
Notice that if
\begin{equation}\label{remark}
\lim_{q\to c-} \int_{\mathbb R} a(x) e^{q x} dx = + \infty,
\end{equation}
then the equation (\ref{akappa}) has a pure imaginary solution $\hat p(\lambda) = iq(\lambda) $ for any $\lambda>0$. In this case the asymptotics of $G_\lambda(x)$ is given by (\ref{Exp1}).

If the limit (\ref{remark}) is finite, then depending on the value of $\lambda$ both (\ref{Exp1}) and (\ref{Exp2}) can realize.

\end{remark}

\begin{proof}

The proof relies on the analyticity of $\tilde a$ in an appropriate strip. We use the representation for $G_\lambda(x)$ in terms of  $\tilde a$:
\begin{equation}\label{GF}
G_\lambda(x) = \frac{1}{2 \pi} \int_{\mathbb R} \frac{\tilde a(p)}{\lambda+1 - \tilde a(p)} e^{i(p,x)} dp.
\end{equation}
Let $x>0$, and construct a rectangular closed contour containing a segment of a real line $[-K,K]$ a parallel segment $[K + i\kappa, -K + i \kappa]$ and two segments $I_1(K),\ I_2(K)$ parallel to the imaginary axis: $I_1(K) = [K, K+i\kappa], \ I_2 (K) = [-K+i\kappa, -K]$, where $0 \leq \kappa<c$. In the first case when $  q < c$ we can take $\kappa: q< \kappa<c$ with no other solutions of (\ref{akappa}) in the strip $0< \mathcal{I}\mathrm{m} \ p<\kappa$. Then we have:
$$
\int_{-K}^K  \frac{\tilde a(p)}{\lambda+1 - \tilde a(p)} e^{i(p,x)} dp + \int_{I_1(K) \cup I_2(K)}  \frac{\tilde a(p)}{\lambda+1 - \tilde a(p)} e^{i(p,x)} dp +
$$
$$
\int_{K}^{-K}  \frac{\tilde a(i\kappa+u)}{\lambda+1 - \tilde a(i\kappa+u)} e^{i(i\kappa+u)x} du \ = \ 2 \pi i \ res_{q}  \left( \frac{\tilde a(p)}{\lambda+1 - \tilde a(p)} e^{i(p,x)} \right),
$$
and
\begin{equation}\label{22}
\begin{array}{c}
\displaystyle
G_\lambda(x) =  2 \pi i \ res_{q}  \left( \frac{\tilde a(p)}{\lambda+1 - \tilde a(p)} e^{i(p,x)} \right) + \int_{\mathbb R} \frac{\tilde a(i\kappa+u)}{\lambda+1 - \tilde a(i\kappa+u)} e^{-\kappa x + i u x} du -
\\[4mm]
\displaystyle
\lim_{K \to \infty} \int_{I_1(K) \cup I_2(K)}  \frac{\tilde a(p)}{\lambda+1 - \tilde a(p)} e^{i(p,x)} dp.
\end{array}
\end{equation}

Let us prove first that the limit in (\ref{22}) is equal to 0. Since the function $a(x) e^{ux} \le a(x) e^{\kappa x} \le B e^{-\delta |x|}$ with some $B>0$ and $\delta>0$ uniformly in $u\in [0,\kappa]$, then
$$
\int_{\mathbb R} a(x) e^{ux} dx < \infty \quad \mbox{ for all } \;   u\in [0,\kappa].
$$
Consequently,
$$
\tilde a(K+iu) = \int a(x) e^{-i (K+iu) x} dx =  \int a(x) e^{ux} e^{-iKx} dx \to 0 \quad \mbox{ as } \; K \to \infty
$$
uniformly in  $u\in [0,\kappa]$. Analogously, the function $\tilde \Phi (p) = \frac{\tilde a(p)}{\lambda+1 - \tilde a(p)}$ is uniformly bounded on $I_1(K)\cup I_2(K)$  and
$|  \tilde \Phi (K+iu)| \to 0$ as $K \to \infty$ uniformly over $u \in [0,\kappa]$. Since $|e^{i(\pm K+iu, x)}|=e^{-u x}\le 1$, then we obtain that
$$
\int_{I_1(K) \cup I_2(K)}  \frac{\tilde a(p)}{\lambda+1 - \tilde a(p)} e^{i(p,x)} dp \to 0 \quad \mbox{ as } \; K \to \infty.
$$

Let us consider the upper segment $[-K + i\kappa, K + i \kappa]$ of the contour. As above we have that the function  $ \tilde a ( i\kappa+u)$ is uniformly bounded for all $u \in [-K,K]$, and since $ a(x) e^{\kappa x} \in L^1 (\mathbb R)$ we get
$$
\tilde a(i\kappa+u) = \int a(x) e^{\kappa x} e^{-i u x} dx \to 0 \quad \mbox{ as } \; |u| \to \infty.
$$
Thus $ \tilde \Phi (i\kappa+u)$ is uniformly bounded for all $u \in [-K, K]$, and (\ref{a''}) implies that
$$
\int_{\mathbb R}  |\tilde \Phi (i \kappa+u)| du < \infty.
$$
Consequently, we have for the integral in (\ref{22}):
\begin{equation}\label{I}
\int_{\mathbb R}  \frac{\tilde a(i\kappa+u)}{\lambda+1 - \tilde a(i\kappa+u)} e^{-\kappa x + iu x} du \ = \ O(e^{-\kappa x}),
\end{equation}
and the main contribution to the asymptotics for (\ref{GF}) comes from the residue at $iq$, and we get
$$
G_\lambda (x) \sim 2 \pi i \ \mathop{\mathrm{res}}\limits_{iq}  \left( \frac{\tilde a(p)}{\lambda+1 - \tilde a(p)} e^{i(p,x)} \right) =  e^{-q(\lambda) x}(C(\lambda) + o(1)), \quad x>0.
$$

In the second case the above sum over the closed contour is equal to 0, and the main contribution to the asymptotics of $G_\lambda(x)$ comes from the term (\ref{I}) with any $\kappa<c$. Consequently, in this case we can conclude the following upper bound (as $x>0$):
$$
G_\lambda(x) \le C_+(\lambda)  e^{-(c-\varepsilon) x} \quad \mbox{ for any } \; \varepsilon>0.
$$

The lower estimate on $G_\lambda(x)$ immediately follows from representation (\ref{G}) if we take in (\ref{G}) the first term:
$$
G_\lambda(x) > \frac12 a(x) \ge C_-(\lambda) e^{-(c + \varepsilon) x} \quad \mbox{ with any } \ \varepsilon>0.
$$

The case when $x<0$ can be considered in the same way using a rectangular closed contour in the negative imaginary semi-plane.

\end{proof}

\begin{remark}
It follows from the last Theorem that for small $\lambda$ we have $q(\lambda)=O(\sqrt{\lambda})$. In the one dimensional case this  improves the general bound \eqref{exp_ineq} where $m(\lambda)=O(\lambda)$.
\end{remark}

\section{Applications}

In this section we present some applications of the results obtained in the previous section.

\subsection{Asymptotic of the principal eigenfunction $\psi_\lambda$}

We consider the operator
\begin{equation}\label{2}
L u(x) \ = \  L_0 u(x) \ + \ V(x) u(x), \quad u(x) \in  L^2({\mathbb R}^d),
\end{equation}
with $L_0$ defined by (\ref{3}). For the potential $V$, we assume that
\begin{equation}\label{3bis}
0 \le V \le 1, \quad  V(x) \in C_0({\mathbb R}^d).
\end{equation}
The operator $L$ is a bounded self-adjoint operator in $ L^2 ({\mathbb R}^d)$.
The equation for the principal eigenfunction (the ground state) $\psi_\lambda$ of the operator $L$
\begin{equation}\label{L0}
(L_0 + V - \lambda) \psi_\lambda \ = \ 0, \quad \lambda>0,
\end{equation}
can be rewritten in the following way
$$
(1+\lambda)\psi_\lambda(x) -\mathcal{S}_a \psi_\lambda(x)=V(x)\psi_\lambda(x)=:F(x).
$$
where $\mathcal{S}_a$ stands for the convolution operator. After proper rearrangements
this yields
\begin{equation}\label{GS}
\psi_\lambda(x) = (1+\lambda)^{-1}\Big(F(x)+\sum\limits_{k=1}^\infty\frac{(\mathcal{S}_a^k F)(x)}{(1+\lambda)^k}\Big)= (1+\lambda)^{-1}\Big(F(x)+\int G_\lambda(x-y)F(y)dy \Big),
\end{equation}
where $G_\lambda(x)$ is defined by (\ref{Gbis}).

%\begin{equation}\label{R}
%Q_{\lambda} \psi (x) \  = \  \psi (x),
%\end{equation}
%where $Q_{\lambda}$ is a compact positive operator defined by
%\begin{equation}\label{Q}
%Q_{\lambda} = \frac{1}{\lambda+1} W_\lambda^{-1} A_\lambda V.
%\end{equation}
%Here
%\begin{equation}\label{a}
%A_{\lambda} = (\lambda+1) (\lambda - L_0)^{-1} - 1 =
%\sum_{n=1}^{\infty} \frac{a^{\ast n}}{ (\lambda+1)^n},
%\end{equation}
%and $W_\lambda$ is an operator of multiplication by the function
%$$
%W_\lambda = 1-\frac{V}{\lambda+1}.
%$$
%$A_\lambda$ is a bounded convolution operator when $\lambda \in \mathcal{D}$ with the following kernel
%\begin{equation}\label{G}
%G_{\lambda}(x-y) = \frac{1}{(2 \pi)^d} \int_{\mathbb{R}^d} e^{-i(p, x-y)} \frac{\tilde a (p) \ d p}{\lambda + 1- \tilde a(p)} \ = \
%\left(  \sum_{n=1}^{\infty} \frac{a^{\ast n}}{ (\lambda+1)^n} \right) (x-y).
%\end{equation}

%Consequently,  $Q_{\lambda}$ is an analytic operator-valued function, such that  $Q_{\lambda}$ is a compact positive operator for any %$\lambda \in \mathcal{D}  = \{ \lambda \in \mathbb{C} \ | \ \ Re \lambda>0 \} $.

We remind below of the spectral properties of operator $L$ and of the conditions ensuring the existence of the principle eigenfunction $\psi_\lambda(x)$. Notice that the operator $L$ has these properties for any kernel $a(x)$ that meets conditions (\ref{aL1}), independently on the behaviour of the tail of $a(x)$. The following statements have been proved in \cite{KMPZh}.

%Since the operator $L$ is bounded and $Q_\lambda$ is the compact positive operator for any $\lambda \in \mathcal{D}$, then by the %Krein-Rutman theorem it follows that if there exists the maximal positive eigenvalue of $L$ and the corresponding eigenfunction %$\psi(x)>0$ (the ground state of $L$) is positive. The uniqueness of the ground state $\psi_\lambda(x) >0$ of the operator $L$ in the %space $L^2({\mathbb R}^d)$ follows from the positivity improving property of the semigroups $e^{tL_0}$ and $e^{tL}$, see e.g. \cite{RS4}, %Theorem XIII.44. The last semigroup is positivity improving due to the Feynman-Kac formula.

%\begin{lemma}
%\label{Lemma 2} {\rm \cite{KMPZh}} If the potential $V(x) \in C_0({\mathbb{R}}^d)$ and the operator $L$ %has a ground state $\psi_\lambda(x) \in  L^2({\mathbb{R}}^d)$, then the ground state  $\psi_\lambda (x) %\in C_b({\mathbb{R}}^d) \cap  L^2({\mathbb{R}}^d)$ and  $\psi_\lambda (x) \to 0$ as $|x| \to \infty$.
%\end{lemma}

\begin{theorem}\label{Theorem 1}{\rm \cite{KMPZh}}
\begin{itemize}
\item
The operator $L$ has only discrete spectrum in the half-plane $\mathcal{D} = \{ \lambda \in \mathbb{C} \ | \ \ Re \lambda>0 \}$.
\item
Assume that $V(x) \equiv 1$ on some open set in $\mathbb R^d$. Then the ground state of $L$ exists.
\item For any $\delta >0$ there is $\varepsilon>0$ such that for any potential $V(x)$ that satisfied
the inequality $V(x) \ge 1- \varepsilon$ on a ball of radius $\delta$,  the ground state of $L$ exists.
\item
If $V(x)\geq \beta$, $\beta\in (0,1)$, in a large enough ball (depending on $\beta$) then the ground state exists.
\item Let $d=1,2$, and assume that $\int_{\mathbb R^d} |x|^2 a(x) dx < \infty $. Then for any $V(x)\not \equiv 0$
 the ground state of  $L$ exists.
\end{itemize}
\end{theorem}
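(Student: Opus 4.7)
The plan is to first locate the essential spectrum of $L$ and then produce a positive eigenvalue, either by an explicit variational trial function (bullets 2--4) or by a low-dimensional scaling argument (bullet 5).

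\textbf{Essential spectrum (bullet 1).} The operator $L_0=\mathcal S_a-I$ in \eqref{3bisbis} is bounded self-adjoint, and since $a$ is even, $a\in L^1\cap C_b$ with $\int a=1$, the symbol $\tilde a$ is real-valued with $|\tilde a|\le 1$, $\tilde a(0)=1$, and $\tilde a(p)\to 0$ at infinity (Riemann--Lebesgue), whence $\sigma(L_0)\subset[-2,0]$. To transfer this to $L$ I would fix $\mu>0$, rearrange $L\phi=\mu\phi$ into $(1+\mu-V(x))\phi=\mathcal S_a\phi$, and invert the convolution side using the Neumann series \eqref{Gbis}--\eqref{G} to obtain the fixed-point equation $\phi=\tilde B_\mu\phi$ with
\begin{equation*}
(\tilde B_\mu\phi)(x)=\int_{\mathbb R^d}\frac{G_\mu(x-y)\,V(y)}{1+\mu-V(x)}\,\phi(y)\,dy.
\end{equation*}
Since $V$ is compactly supported with $V\le 1$, the denominator is bounded below by $\mu>0$; since $G_\mu\in L^2(\mathbb R^d)$ by Theorems \ref{t_1}--\ref{l2}, the kernel of $\tilde B_\mu$ is Hilbert--Schmidt, so $\tilde B_\mu$ is compact. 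Because $\mu\mapsto\tilde B_\mu$ is analytic on $(0,\infty)$, analytic Fredholm theory forces $\{\mu>0:1\in\sigma(\tilde B_\mu)\}$ to be a discrete subset of $(0,\infty)$ with finite-dimensional eigenspaces, giving the first bullet.

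\textbf{Existence via trial functions (bullets 2--4).} Since $\sigma_{\mathrm{ess}}(L)\subset(-\infty,0]$, a positive principal eigenvalue exists as soon as some $\phi\in L^2$ satisfies
\begin{equation*}
\langle L\phi,\phi\rangle=\iint a(x-y)\phi(x)\phi(y)\,dx\,dy-\|\phi\|^2+\int V\phi^2\,dx>0.
\end{equation*}
For bullet 2 choose $\phi\ge 0$ nonzero, supported in the open set where $V\equiv 1$; the last two terms cancel and the convolution term is strictly positive because $a\ge 0$ is continuous with $\int a=1$. For bullets 3 and 4 take $\phi=|B_R|^{-1/2}\mathbf 1_{B_R}$ on the ball where $V$ is large; a short computation gives $\langle L\phi,\phi\rangle=\int V\phi^2\,dx-|B_R|^{-1}\iint_{B_R\times B_R^c}a(x-y)\,dx\,dy$. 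In bullet 3 the first term is $\ge 1-\varepsilon$ while the second is a fixed number strictly less than $1$ (depending on $\delta$), so small enough $\varepsilon$ works. In bullet 4 the first term is $\ge\beta$ while the second tends to $0$ as $R\to\infty$ by dominated convergence and $\int a=1$.

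\textbf{Low dimension (bullet 5, main obstacle).} Here $V$ may be arbitrarily small, so the trial functions above fail and the argument must exploit the recurrence of random walks in $d\le 2$. The second-moment assumption gives the Taylor expansion
\begin{equation*}
1-\tilde a(p)=\tfrac12\langle Mp,p\rangle+o(|p|^2),\qquad p\to 0,
\end{equation*}
with $M$ the second-moment matrix of $a$, so $I-\mathcal S_a$ behaves like a constant multiple of $-\Delta$ near the bottom of the spectrum. I would then mimic the classical low-dimensional Schr\"odinger bound-state proof: in $d=1$, use rescaled bumps $\phi_R(x)=R^{-1/2}\phi(x/R)$ with $\phi\in C_c^\infty$ and $\phi(0)\ne 0$, so that $\langle(I-\mathcal S_a)\phi_R,\phi_R\rangle=O(R^{-2})$ dominates over $\int V\phi_R^2=O(R^{-1})$; in $d=2$, use logarithmic cut-offs $\phi_R$ equal to $1$ on the support of $V$ and decaying like $\log(R/|x|)/\log R$ for $1\le|x|\le R$, yielding $\langle(I-\mathcal S_a)\phi_R,\phi_R\rangle=O(1/\log R)$ against $\int V\phi_R^2$ of the same order but with a larger constant. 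I expect the sharpest difficulty in $d=2$: the logarithmic balance is delicate, and one must control the $o(|p|^2)$ remainder in $1-\tilde a(p)$ and the contribution of the dispersion away from $p=0$ uniformly, rather than relying solely on the local expansion, to ensure that the variational ratio stays strictly positive for $R$ large.
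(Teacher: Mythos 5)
First, a point of order: the paper itself does not prove Theorem \ref{Theorem 1} --- it is imported verbatim from \cite{KMPZh} ("The following statements have been proved in \cite{KMPZh}"), so there is no in-paper proof to compare yours against. Your self-contained argument follows the standard route (a Birman--Schwinger-type reduction for discreteness, variational trial functions for existence), and much of it is sound. In particular, the low-dimensional part is actually easier than you fear: the global inequality $1-\tilde a(p)=\int a(x)(1-\cos(p\cdot x))\,dx\le \tfrac12|p|^2\int|x|^2a(x)\,dx$ makes any control of the $o(|p|^2)$ remainder unnecessary, and in $d=2$ the term $\int V\phi_R^2\,dx=\int V\,dx$ is a fixed positive constant (not $O(1/\log R)$ as you state), so it trivially dominates the $O(1/\log R)$ kinetic term.

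There are, however, two genuine gaps. (i) In the first bullet your fixed-point argument only shows that the set of $\mu$ for which the eigenvalue equation has a nontrivial solution is discrete; to conclude that there is no other spectrum in $\{\mathrm{Re}\,\lambda>0\}$ you need the equivalence $\mu\in\rho(L)\iff 1\notin\sigma(\tilde B_\mu)$, which comes from the factorization $L-\mu=\bigl(\mathcal S_a-(1+\mu)\bigr)\bigl(I-(1+\mu)^{-1}V\bigr)\bigl(I-\tilde B_\mu\bigr)$, the first two factors being boundedly invertible for $\mathrm{Re}\,\mu>0$ because $0\le V\le1$. (ii) In bullets 2 and 3 you assert, respectively, that $\iint a(x-y)\phi(x)\phi(y)\,dx\,dy>0$ for $\phi\ge0$ supported in the set where $V$ is large, and that $|B_\delta|^{-1}\iint_{B_\delta\times B_\delta^c}a(x-y)\,dx\,dy<1$. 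Both claims fail if $a$ vanishes on a neighbourhood of the origin (e.g.\ $a$ supported in the annulus $\{1\le|x|\le2\}$ while the support of $\phi$, or the ball $B_\delta$, has diameter less than $1$); such kernels are admissible under \eqref{aL1}, and the paper explicitly stresses that the theorem holds for any kernel satisfying \eqref{aL1}. The repair is to use trial functions that are not supported only where $V$ is large: take $\phi=\mathbf 1_{U'}+\epsilon\psi$ with $\psi\ge0$ chosen so that $\int\psi\,(a*\mathbf 1_{U'})\,dx>0$; the cross term in $\langle L\phi,\phi\rangle$ is positive of order $\epsilon$ and dominates the $O(\epsilon^2)$ remainder. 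Bullet 3 then follows from bullet 2 via the monotonicity $\langle L\phi,\phi\rangle\ge Q_0(\phi)-\varepsilon\|\phi\|^2$, where $Q_0$ is the form with $V$ replaced by $\mathbf 1_{B_\delta}$. Your bullet 4 is fine as written, since $|B_R|^{-1}\iint_{B_R\times B_R^c}a(x-y)\,dx\,dy\to0$ as $R\to\infty$ irrespective of where $a$ is supported.
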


We proceed with studying the asymptotic behavior of the function $\psi_\lambda(x)$ as $|x| \to \infty$. %depending on the asymptotics of the tail of the function $a(x)$.

\begin{theorem}\label{T1_1}  Let $V\in C_0(\mathbb R^d)$, and assume that $V(x)\not \equiv 0$. If
\begin{equation}\label{bou_a}
c_-(1+|x|)^{-(d+\alpha)}\leq a(x)\leq c_+(1+|x|)^{-(d+\alpha)},
\end{equation}
then the principal eigenfunction satisfies the estimates
\begin{equation}\label{main_ineq}
\tilde c_-(\lambda)(1+|x|)^{-(d+\alpha)}\leq \psi_\lambda(x)\leq \tilde c_+(\lambda)(1+|x|)^{-(d+\alpha)}
\end{equation}
with $0<\tilde c_-(\lambda)\leq \tilde c_+(\lambda)$. \\ \\
If  bound (\ref{expbound}) holds, then for $\psi_\lambda(x)$  the following estimate holds:
\begin{equation}\label{E0bisbis}
\psi_\lambda (x) \le K(\lambda)e^{-m(\lambda) |x|}. \\ \\
\end{equation}
Let $d=1$, and assume that condition (\ref{aexp})
is fulfilled. Then
\begin{itemize}
\item if there exists a solution $\hat p(\lambda) = iq(\lambda) $ of equation (\ref{akappa})
with $q(\lambda) <c$, then
\begin{equation}\label{E1bis}
\psi_\lambda (x) = e^{-q(\lambda) |x|}(C(\lambda) + o(1)), \quad |x| \to \infty,
\end{equation}
\item if the solution of (\ref{akappa}) does not exists for $q<c$, then
\begin{equation}\label{E2bis}
C_-(\lambda) e^{-(c+ \varepsilon) |x|} \le \psi_\lambda(x) \le C_+(\lambda) e^{-(c-\varepsilon)|x|}
\end{equation}
with any $\varepsilon>0$.
\end{itemize}

\end{theorem}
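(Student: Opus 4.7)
The plan is to read off the asymptotics of $\psi_\lambda$ from the representation formula \eqref{GS}
$$
\psi_\lambda(x) = (1+\lambda)^{-1}\Big(F(x)+\int_{\mathbb R^d} G_\lambda(x-y)F(y)\,dy\Big),\qquad F(x)=V(x)\psi_\lambda(x),
$$
and then invoke the previously established bounds on $G_\lambda$. The key preliminary observations are: $\psi_\lambda$ is the ground state, hence strictly positive and continuous on $\mathbb R^d$; since $V\in C_0(\mathbb R^d)$ is nonnegative and nontrivial, the source $F$ is continuous, compactly supported, nonnegative and not identically zero. In particular, there exists a ball $B\subset \mathrm{supp}\,V$ and a constant $m_0>0$ such that $F\ge m_0 \mathbf 1_B$. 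Moreover for $|x|$ larger than the diameter of $\mathrm{supp}\,V$, the local term $F(x)$ vanishes, and uniformly in $y\in\mathrm{supp}\,V$ one has $|x-y|/|x|\to 1$ as $|x|\to\infty$.

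\textbf{Polynomial case.} For the upper bound I apply the upper estimate in Theorem \ref{t_1}: for $|x|$ large,
$$
\psi_\lambda(x)\le (1+\lambda)^{-1}\tilde c_+\int_{\mathrm{supp}\,V}(1+|x-y|)^{-(d+\alpha)}F(y)\,dy\le \tilde c_+(\lambda)(1+|x|)^{-(d+\alpha)},
$$
the last inequality following from $(1+|x-y|)^{-(d+\alpha)}\le C(1+|x|)^{-(d+\alpha)}$ uniformly for $y$ in a compact set. The lower bound is obtained analogously from the lower estimate in Theorem \ref{t_1}, restricting the integral to $B$ and using $F\ge m_0\mathbf 1_B$.

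\textbf{Exponential case.} The upper bound \eqref{E0bisbis} follows from Theorem \ref{t_2} by the same substitution $G_\lambda(x-y)\le k(\lambda)e^{-m(\lambda)|x-y|}$ and the elementary estimate $|x-y|\ge |x|-\mathrm{diam}(\mathrm{supp}\,V)$, which absorbs $y$ into the prefactor. For the one-dimensional refined asymptotics I use Theorem \ref{l2}. In the first case, where an imaginary pole $iq(\lambda)$ with $q(\lambda)<c$ exists, I substitute $G_\lambda(x-y)=e^{-q(\lambda)|x-y|}(C(\lambda)+o(1))$ into the convolution. For $x\to+\infty$ and $y$ ranging over the compact support of $V$, one has $|x-y|=x-y+o(1)$, hence
$$
\psi_\lambda(x)=(1+\lambda)^{-1}e^{-q(\lambda)x}\Big(C(\lambda)\int_{\mathbb R}e^{q(\lambda)y}F(y)\,dy+o(1)\Big),
$$
which is \eqref{E1bis} after absorbing the finite positive integral into the constant; $x\to-\infty$ is symmetric. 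In the second case the two-sided bound \eqref{Exp2} on $G_\lambda$ transfers to $\psi_\lambda$ in the same way, giving \eqref{E2bis}.

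\textbf{Main obstacle.} The only subtle point is producing a genuine \emph{lower} bound on $\psi_\lambda$ in the polynomial and second exponential case. One needs strict positivity of $F$ on a set of positive measure; this in turn requires strict positivity of $\psi_\lambda$ itself. I would justify the latter from the fact that $\psi_\lambda$ is a principal (ground state) eigenfunction of the bounded self-adjoint operator $L=L_0+V$ considered in Theorem \ref{Theorem 1}, together with the positivity preserving property of $\mathcal S_a$: iterating the identity $\psi_\lambda=(1+\lambda)^{-1}(\mathcal S_a\psi_\lambda+V\psi_\lambda)$ and using that $a>0$ on all of $\mathbb R^d$ shows that $\psi_\lambda>0$ everywhere, which is what is needed to close the argument.
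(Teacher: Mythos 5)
Your proposal is correct and follows essentially the same route as the paper: the paper's own proof is a one-line appeal to the representation \eqref{GS} together with the bounds on $G_\lambda$ from Theorems \ref{t_1}, \ref{t_2} and \ref{l2}, which is exactly the argument you carry out in detail. Your additional justification of the strict positivity of $\psi_\lambda$ (hence of $F$ on a set of positive measure), needed for the lower bounds, fills in a point the paper asserts without comment.
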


\begin{proof}
Since function  $F(x) = V(x) \psi_\lambda(x)$ has a compact support and is positive, then relations (\ref{main_ineq}), \eqref{E1bis} and \eqref{E2bis}  follow from representation (\ref{GS}) and estimates (\ref{main_ineq1}), (\ref{Exp1})-(\ref{Exp2}) on the function $G_\lambda(x)$.
\end{proof}

\subsection{Application to the solution of the Cauchy problem}

Let us consider the following Cauchy problem 
\begin{equation}\label{ACP}
\frac{\partial u}{\partial t} = L_0 u(x,t) - m u(x,t) + f(x), \quad u(x,0) = 0,
\end{equation}
where $m>0$ is a constant, $f(x) \geq 0, \; f \not \equiv 0$, $f\in L^1(\mathbb R^d)\cap C_b(\mathbb R^d)$. Equation \eqref{ACP} describes the evolution of the population density in a spacial contact model with mortality rate $ 1+ m$ and space inhomogeneous source term $f(x)$.  In the absence of
the source term, the density of the population goes to zero as $t\to \infty$ (see  \cite{KKP}).  The question of interest is how a flow  coming from the source into
the population may change the asymptotics of the density.

Since $S(t) = e^{t(L_0 - m)}$ is a contraction semigroup both in $L^2(\mathbb R^d)$ and $C_b(\mathbb R^d)$ spaces,
the solution $u(x,t)$ of problem (\ref{ACP}) converges to the corresponding stationary solution $\hat u(x)$:
$$
\| u(x,t) - \hat u(x) \| \to 0, \quad t \to \infty.
$$
The convergence takes place both in $L^2(\mathbb R^d)$ and in $C_b(\mathbb R^d)$ norms.
Concerning the stationary solution  $\hat u(x)$ let us observe that according to (\ref{GS}) it can be expressed as
\begin{equation}\label{hatu}
\hat u(x) = (m-L_0)^{-1} f(x) = \frac{1}{1+m} \Big( f(x) + \int G_m(x-y) f(y) dy \Big).
\end{equation}
It turns out that the behaviour of $u(x,t)$ and $\hat u(x)$ depends on both the tail of the convolution kernels and of the source term.

As above we consider separately the cases of polynomial and  exponential
tails of $a$ and $f$.
Assume first that the function $a(x)$ satisfies inequalities (\ref{bou_a1}) and that the function $f(x)$ satisfies analogous inequalities with some $\alpha_1$.
Then, due to \eqref{main_ineq1},  $\hat u(x)$ admits the following bounds
 \begin{equation}\label{bou_uhat}
c_-(1+|x|)^{-(d+\tilde\alpha)} \leq \hat u(x) \leq c_+(1+|x|)^{-(d+\tilde\alpha)},
\end{equation}
with $\tilde\alpha= \min(\alpha_1, \alpha)$.
Moreover, $u(x,t)$ at any time $t>0$ admits the inequality
\begin{equation}\label{hatubis}
0 \leq u(x,t) \leq \hat u(x).
\end{equation}
To prove this inequality we define $v(x,t) = u(x,t) - \hat u(x)$. Then $v(x,0) = - \hat u(x) < 0$. Since $e^{t(L_0 - m)}$ is the positivity improving semigroup, then $v(x,t) <0$ for all $t>0$. Therefore, $u(x,t)< \hat u(x)$ for all $t>0$. The lower bound in \eqref{hatubis} follows from Duhamel's formula.

In the case of exponentially decaying $a$ and $f$ both $\hat u$ and $u$ decay exponentially.
This can be justified in the same way as above.

If $f(x) \in C_0(\mathbb R^d)$, then as a direct consequence of \eqref{hatu}, the stationary
solution  $\hat u(x)$ satisfies the same tail estimates as the dispersal kernel.

In conclusion we provide some comments on possible interpretation of the above results.
\begin{itemize}
\item Consider the contact model in continuum (see \cite{KKP}) describing an infection spreading
process in a society with the recovering intensity $1+m$. For $m>0$ this intensity is sufficient
to make the density of infected population degenerate. Suppose that the source of the infection
is localized, for instance $f(x)=\lambda  \mathbf{1}_B(x)$,  $\lambda >0$, with a bounded (small) set $B$ where infected individual appear (from outside).
Then we can estimate the density of infected population on a distance from the infection source $B$ which essentially depends on the infection spreading rate. Even very small region $B$ may produce a drastic effect!
\item The same process we have in the information spreading in the society.
Even if you will have
very strong  real information delivering rate ($m>>1$) the influence of  a constant mass-media flow of wrong information may be determining for the opinion formation.
Especially due to long range  spreading possibilities (TV). 
\item Free Kawasaki dynamics of continuous particle system (see \cite{KKOSS}) can be used for modeling a
pollution spreading process.  Equation \eqref{ACP} with $f=0$ and some $u_0$ describes the evolution of the pollution density in the presence of a self cleaning ability of the environment with intensity $m>0$. In the general case the function $f\geq 0$ represents the density of the pollution source. The solution $u(x,t)$ is the density
of the pollution after time $t$, and $\hat u(x)$ is the large time limit of this density.
The estimates \eqref{bou_uhat} reflect the fact that even for a source localized in a small area,
for instance for $f=\lambda  \mathbf{1}_B(x)$ with a small ball $B$, the strong pollution spreading can be observed at long distances from the source.
\end{itemize}

%%%%%%%%%%%%%%%%%%%%%%

\end{document}